\documentclass[11pt,letterpaper]{article}

\usepackage{url}
\usepackage{fullpage}
\usepackage{amsmath}
\usepackage{algorithm}
\usepackage[noend]{algpseudocode}
\usepackage{amssymb}
\usepackage{amsthm}
\usepackage{color}
\usepackage{array}
\usepackage{xy}
\usepackage{setspace}
\usepackage{varwidth}% http://ctan.org/pkg/varwidth
\usepackage{multicol}
\usepackage{algorithm}
\usepackage{hyperref}
\usepackage{thm-restate}
\usepackage{todonotes}

 \newcommand{\E}{\mathbb{E}}

\newtheorem{thm}{Theorem}[section]

\theoremstyle{remark}

\newtheorem{theorem}{Theorem}[section]

\newtheorem{lemma}[thm]{Lemma}
\newtheorem{proposition}[thm]{Proposition}

\theoremstyle{remark}

\newcommand{\defn}[1]{\textbf{\emph{#1}}}
\renewcommand{\paragraph}[1]{\vspace{.2 cm} \noindent \textbf{#1}}

\usepackage{authblk}

\begin{document}

\title{With Leverage, the Gambler is Not the One Who Gets Ruined}   
\author{William Kuszmaul}
\affil{Carnegie Mellon University}
\date{}

\maketitle

\begin{abstract}
In this note we explore the following simple question: How does the classic \emph{gambler's ruin} phenomenon change if we allow the gambler \emph{financial leverage}? We show that, with a moderate amount of leverage, it is possible for an (idealized) gambler to sustain seemingly improbable profits over a remarkably long time horizon without ever going bankrupt, \emph{even} if the gambler makes only fair bets. With a leverage ratio of $L = 10$ (less than what Lehman Brothers Holdings had from 1995--2008), the probability of multiplying one's money by $1000\times$ without going bankrupt is roughly $50\%$. More generally, given $L$-fold leverage, and given the goal of multiplying one's money by a factor of $r > 1$, the optimal gambling strategy has probability roughly $r^{-1/L}$ of success. The results in this note are designed as part of the course development for the CMU randomized algorithms course.
\end{abstract}

\section{Introduction}

Consider the following simple model of a ``finance company'': The company begins with $n$ dollars, and wishes to turn it into $r \cdot n$ dollars for some $r > 1$. The company gets to make a sequence of bets, where each bet must have expected value $0$, and where no bet can risk more than the amount of money the company currently has. The company continues making bets until it either reaches its goal of at least $r \cdot n$ dollars, or until it goes bankrupt.

The classic Optional Stopping Theorem says that there is no way for this company to make an expected profit. Starting with $n$ dollars, the probability of reaching $r \cdot n$ before bankruptcy is at most $\frac{1}{r}.$ This basic insight, that it is very difficult for a (fair) gambler to make a significant profit without going bankrupt, is known as the \emph{gambler's ruin} \cite{Coolidge1909}. 

Now, suppose we see a real-world ``finance company'' that actually does turn $n$ into $r \cdot n$ for some large $r > 1$. Should we conclude that the ``bets'' being made by this company are actually profitable?

Consider, for example, Lehman Brothers Holdings, which from 1995 to 2008 multiplied their shareholder equity by a factor of roughly $8$, starting with around 3.4 billion dollars, and ending with around 26 billion dollars \cite{forstein2010mary}. The Optional Stopping Theorem suggests that, if they were making fair bets, they should achieve this with probability $\le \frac{3.4}{26} \approx \frac{1}{8}.$\footnote{It should be noted that this calculation ignores many factors. From year-end 1995 to year-end 2008, investing passively in the S\&P 500 would have resulted in a $\approx 1.8 \times$ growth. Additionally, in the case of Lehman, a potentially significant portion of the equity increase came not from investment returns, but from raising additional shareholder capital.} But there's a catch: When Lehman Brothers Holdings went bankrupt in 2008, it wasn't worth \emph{zero} dollars, it was worth roughly \emph{negative $150$ billion} \cite{cline2013lehman}. Lehman was leveraging their equity: Using the roughly $\$26$ billion they \emph{had}, they borrowed roughly $\$640$ billion to make bets with \cite{forstein2010mary}. In fact, for the entire period from 1995 to 2008, Lehman's leverage ratio (the ratio between total assets and shareholder equity) sat between $20$--$40\times$ \cite{forstein2010mary}. This allowed them to repeatedly make bets in which they risked more than they had.

We can capture this notion of \emph{leverage} as follows. Consider some parameter $L$, and suppose that our imaginary company is allowed to borrow up to $L - 1$ times its equity, at any given moment. So, if the company has $n$ \emph{actual} dollars, it can make a fair bet in which it risks up to $Ln$ dollars. We call this game, which we formalize in Section \ref{sec:main}, the $L$-Leverage Game. 

The main observation in this note is that, even with moderate leverage, the hypothetical company can make a \emph{remarkable} amount of profit before going bankrupt. Indeed, we show that there is a betting strategy which turns $n$ dollars into $r \cdot n$ dollars with probability arbitrarily close to $r^{-1/L}$. For perspective, this means that the ``thing that we can do with roughly 50\% probability, without going bankrupt'' is to multiply our money by a factor of $2^{L}$. 

We also prove a matching lower bound, saying that no strategy can achieve a probability of larger than $r^{-1/L}$ of obtaining $r \cdot n$ dollars before bankruptcy. This lower bound is also of pedagogical interest -- it offers a clean but nontrivial expository example of how to apply the Optional Stopping Theorem to reason about a betting game that does not fit within the standard ``fair-bets'' framework.

Returning to the Lehman example from earlier, from 1995 to 2008 Lehman multiplied their equity by a factor of roughly $r = 8$. During that same time period, they consistently had leverage ratio $L > 20$. Given this leverage ratio, our toy model says that an $r = 8 \times$ growth isn't that impressive. It has a success probability of roughly $8^{-1/20} \approx 90\%$, even with fair bets. In fact, with $20\times$ leverage, a hypothetical company can multiply its money by a factor of $2^{20} \approx 10^6$ with probability roughly $1/2$ of avoiding bankruptcy.

Real life, of course, is more complicated than a toy model. This note does not claim to capture the complexities of real-world finance (or the Lehman Brothers example, for that matter). Nonetheless, it does serve as an example of how, even with very simple probabilistic modeling, we can learn interesting things about the world. This note was written as part of course development for the CMU randomized algorithms course. 

\section{Formalizing and Analyzing the \texorpdfstring{$L$-}{L-}Leverage Game}\label{sec:main}

Throughout the rest of the paper, we will discuss our model as a single-player game, in which a player Alice starts with $1$ dollar, and makes a sequence of bets until she is either bankrupt (has $\le 0$ dollars) or has reached her goal of having $m \ge 1$ dollars. We call this the \defn{$L$-Leverage Game}. Note the slight change in variables from the version of the game discussed in the introduction; because the game is scale-invariant, we may as well assume that Alice starts with $1$ dollar (rather than $n$) and aims for $m$ dollars (rather than $r \cdot n$).

\paragraph{The $L$-Leverage Game. }Given a goal $m \ge 1$, and a leverage ratio $L \ge 1$, we will consider the following betting game. Let $X_0$ be the initial amount of money that Alice has, and let $X_t$, $t > 0$, be the amount of money that Alice has after the $t$-th step of the game. During the $t$-th step (and if the game has not yet ended), Alice gets to determine $X_t$ as follows: Based on the outcomes of $X_0, X_1, \ldots, X_{t - 1}$, Alice gets to select an arbitrary distribution for $X_t$ such that: 
\begin{itemize}
\item $\E[X_t | X_0, X_1, \ldots, X_{t - 1}] = X_{t - 1}$, and
\item $X_t \ge -(L - 1)X_{t - 1}$ with probability $1$.
\end{itemize}
Once Alice specifies the distribution, the random variable $X_t$ is sampled from that distribution. The game ends when $X_t \ge m$ or $X_t \le 0$. Alice is required to follow a betting strategy in which the game ends (eventually) with probability $1$. And, since there is no point in Alice placing a bet in which she ends up with strictly more than $m$ dollars, we will additionally assume throughout that Alice places bets in such a way that $X_t \le m$ for all $t$. 

Notice that the first condition on $X_t$ means that Alice must place bets with zero expected value. The sequence $X_0, X_1, \ldots, X_t$ is therefore a martingale. The second condition, on the other hand, captures the idea that Alice can, prior to each bet $t$, ``borrow'' up to $(L - 1)X_{t - 1}$ dollars, and place bets in which she stands to lose not just her $X_{t - 1}$ dollars, but also the $(L - 1)X_{t - 1}$ dollars that she borrowed. Hence Alice is able to risk up to an $L\times$ factor more than she has on each bet.

\paragraph{Warmup: $L = 1$. }As a warmup, let us consider the case where $L = 1$. This is just the classical leverage-free case, in which Alice must gamble using her own money. 

This setting can be analyzed using (a special case of) the Optional Stopping Theorem (OST): 
\begin{theorem}[Optional Stopping Theorem]
For $C > 0$, if $X_0, X_1, \ldots, X_t$ is a martingale satisfying $X_t \in [-C, C]$ for all $t$ with probability $1$, and if $\tau$ is a stopping time that exists with probability $1$, then $\E[X_\tau] = X_0$. If, instead, $X_0, X_1, \ldots$ is a supermartingale, then $\E[X_\tau] \le X_0$.
\end{theorem}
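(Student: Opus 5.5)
We would prove this statement via the \emph{stopped process}. Define $Y_t = X_{\min(t,\tau)}$. The plan has three steps. First, show that $Y_0, Y_1, \ldots$ is again a martingale (respectively a supermartingale). Second, deduce that $\E[Y_t] = X_0$ (respectively $\le X_0$) for every fixed $t$. Third, pass to the limit $t \to \infty$ using boundedness and the fact that $\tau$ is finite with probability $1$.

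For the first step, write the increment as
\[
Y_t - Y_{t-1} = \mathbf{1}[\tau \ge t]\,(X_t - X_{t-1}).
\]
Because $\tau$ is a stopping time, the event $\{\tau \ge t\}$ is the complement of $\{\tau \le t-1\}$. It is therefore determined by $X_0, \ldots, X_{t-1}$, so the indicator can be pulled out of the conditional expectation:
\[
\E[Y_t - Y_{t-1} \mid X_0, \ldots, X_{t-1}] = \mathbf{1}[\tau \ge t]\,\E[X_t - X_{t-1} \mid X_0, \ldots, X_{t-1}].
\]
This equals $0$ in the martingale case and is $\le 0$ in the supermartingale case. One technical point: the natural filtration of $Y$ is coarser than that of $X$. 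So we would state the (super)martingale property with respect to $X_0, \ldots, X_{t-1}$ and then, if needed, apply the tower property. For the second step, taking expectations and inducting on $t$ gives $\E[Y_t] = \E[Y_0] = X_0$ in the martingale case, and $\E[Y_t] \le X_0$ in the supermartingale case. This is where we use that $X_0$ is a fixed starting value, as in the $L$-Leverage Game.

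For the third step, $\tau < \infty$ with probability $1$, so for almost every outcome $Y_t = X_\tau$ once $t \ge \tau$. Hence $Y_t \to X_\tau$ almost surely. Moreover $|Y_t| \le C$ for all $t$, because each $Y_t$ equals some $X_s$. The dominated convergence theorem, with the constant dominating function $C$, then gives $\E[X_\tau] = \lim_{t\to\infty} \E[Y_t]$. This limit equals $X_0$ in the martingale case and is $\le X_0$ in the supermartingale case, since a limit of quantities each at most $X_0$ is at most $X_0$.

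The only delicate point is this final limit interchange; it is exactly where the hypotheses $X_t \in [-C, C]$ and $\tau < \infty$ almost surely are used. Without boundedness the statement fails, as the classic double-until-you-win strategy shows. For a more elementary version avoiding dominated convergence, we would bound the error directly:
\[
|\E[X_\tau] - \E[Y_t]| \le \E\bigl[|X_\tau - X_t| \, \mathbf{1}[\tau > t]\bigr] \le 2C \Pr[\tau > t] \to 0.
\]
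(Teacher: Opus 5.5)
The paper gives no proof of this statement. It quotes the Optional Stopping Theorem as a standard tool and only applies it, so there is no in-paper argument to compare yours against.

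Your argument is the standard stopped-process proof, and it is correct. The identity $X_{\min(t,\tau)} - X_{\min(t-1,\tau)} = \mathbf{1}[\tau \ge t]\,(X_t - X_{t-1})$, with $\{\tau \ge t\}$ determined by $X_0, \ldots, X_{t-1}$, gives the (super)martingale property of the stopped process. The bound $|\E[X_\tau] - \E[X_{\min(t,\tau)}]| \le 2C \Pr[\tau > t] \to 0$ (or dominated convergence) justifies the limit.

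Make explicit that your proof goes through verbatim when $\sigma(X_0,\ldots,X_{t-1})$ is replaced by any filtration to which the process is adapted. That is the form the paper actually uses. It applies the theorem to $(X_t^+)^{1/L}$, which it shows is a supermartingale by conditioning on $X_0, \ldots, X_{t-1}$ rather than on the values of $(X_s^+)^{1/L}$. That process lies in $[0, m^{1/L}]$ thanks to the standing assumption $X_t \le m$.

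Finally, for the supermartingale half you only need the lower bound $X_t \ge -C$. Fatou's lemma applied to the nonnegative variables $X_{\min(t,\tau)} + C$ gives $\E[X_\tau] \le \liminf_{t} \E[X_{\min(t,\tau)}] \le X_0$.
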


The OST tells us that, if Alice's game completes after the $\tau$-th step, then $\E[X_\tau] = X_0 = 1$. Since $X_{\tau}$ is either $0$ or $m$, it follows that $\Pr[X_\tau = m] = 1/m$. Furthermore, the conclusion is agnostic to Alice's strategy. In the case where Alice has no leverage, all strategies (that terminate with probability 1 and never allow for Alice to have $> m$ dollars) have the same winning probabilities as one another. 

\paragraph{How to play when $L > 1$. }Now, let us consider the case where $L > 1$. To motivate the optimal strategy for Alice, we first consider several suboptimal ones.

A natural but suboptimal strategy is to play as follows: Alice places a single bet in which she ends up with $m$ dollars with probability $L/(m + L - 1)$, and ends up with $-(L - 1)$ dollars with probability $(m - 1)/(m + L - 1)$. This bet results in $\E[X_1] = 1 = X_0$, so it is allowed, and it succeeds with probability $L/(m + L - 1)$. This is much worse than the $\approx m^{-1/L}$ winning probability that the optimal strategy will achieve. 

A better strategy would be for Alice to focus on repeatedly doubling her money. On each step, if she has $X_{t - 1}$ dollars, then she turns this into $2X_{t - 1}$ dollars with probability $1 - 1/(L + 1)$ and into $-(L - 1)X_{t - 1}$ dollars with probability $1/(L + 1)$. Repeating this strategy for $\lg_2 m$ steps (and assuming for simplicity that $m$ is a power of $2$), Alice's overall probability of winning becomes 
$$(1 - 1/(L + 1))^{\lg_2 m} \approx e^{- (\lg_2 m) / (L + 1)} = m^{-(\lg_2 e) / (L + 1)}.$$

This strategy reveals an important lesson: Rather than trying to make a single bet that will win her the entire game, Alice should try to grow her pool slowly so that she can have more and more money to work with (and to leverage!) on future bets. 

This suggests that the optimal strategy is the following: Alice should pick some small $\epsilon > 0$ (such that $m$ is a power of $(1 + \epsilon)$), and then repeatedly make bets so that, if on step $t$ she is still playing, then 
$$X_t = \begin{cases}
(1 + \epsilon) X_{t - 1} & \text{with probability } \frac{L}{L + \epsilon}, \\
-(L - 1)X_{t - 1} & \text{with probability } \frac{\epsilon}{L + \epsilon}.
\end{cases}$$
Note that this is a fair bet, since it makes a profit of $\epsilon X_{t - 1}$ with probability $\frac{L}{L + \epsilon}$ and a loss of $LX_{t - 1}$ with probability $\frac{\epsilon}{L + \epsilon}$. We call this strategy the \defn{$\epsilon$-doubling strategy}. 

% Note that the $\epsilon$-doubling strategy can be viewed as a variation of the strategy outlined in the introduction, but modified so that Alice's bets are scale-invariant (aiming to achieve fixed multiplicative growth rather than an additive one on each step). This distinction is important only in so far as that it makes some of the exposition below cleaner. 

\paragraph{Analyzing the $\epsilon$-doubling strategy. } We now argue that, as $\epsilon \to 0$, the $\epsilon$-doubling strategy achieves a success probability of $m^{-1/L}$. 

\begin{lemma}
    In the $L$-Leverage Game, the $\epsilon$-doubling strategy achieves a success probability of
    \[
        \left(\frac{L}{L + \epsilon}\right)^{\log_{1 + \epsilon} m}
        = m^{-\log_{1 + \epsilon}\left(1 + \frac{\epsilon}{L}\right)}.
    \]
    In particular, as $\epsilon \to 0$, this approaches $m^{-1/L}$.
\end{lemma}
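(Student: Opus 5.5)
The plan is to trace the trajectory of the $\epsilon$-doubling strategy directly, since it is deterministic up to the single ``loss'' event. Write $k = \log_{1+\epsilon} m$, which is an integer by the choice of $\epsilon$.

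\textbf{Step 1: the state space.} By induction on $t$, while Alice is still playing after $t$ steps she holds exactly $(1+\epsilon)^t$ dollars. Each step either multiplies her holdings by $(1+\epsilon)$, or sends them to $-(L-1)(1+\epsilon)^{t-1}$. The latter is $\le 0$ because $L \ge 1$, so that outcome ends the game in bankruptcy.

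\textbf{Step 2: success is $k$ consecutive wins.} Alice reaches $m = (1+\epsilon)^k$ exactly at step $k$ if every one of the first $k$ bets wins. She never exceeds $m$, so the constraint $X_t \le m$ is respected, and the game terminates within $k$ steps with probability $1$. The bets are independent with win probability $\frac{L}{L+\epsilon}$, so the success probability is $\left(\frac{L}{L+\epsilon}\right)^{k}$.

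\textbf{Step 3: the closed form.} Rewrite
\[
\left(\frac{L}{L+\epsilon}\right)^{\log_{1+\epsilon} m}
= \left(1+\frac{\epsilon}{L}\right)^{-\log_{1+\epsilon} m}
= (1+\epsilon)^{-\log_{1+\epsilon}(1+\epsilon/L)\cdot \log_{1+\epsilon} m}
= m^{-\log_{1+\epsilon}(1+\epsilon/L)}.
\]
This uses the identity $a^{\log_b c} = c^{\log_b a}$.

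\textbf{Step 4: the limit.} The exponent is
\[
\log_{1+\epsilon}\left(1+\frac{\epsilon}{L}\right) = \frac{\ln(1+\epsilon/L)}{\ln(1+\epsilon)}.
\]
Since $\ln(1+x) = x + O(x^2)$, this ratio tends to $\frac{\epsilon/L}{\epsilon} = 1/L$ as $\epsilon \to 0$, which could also be checked by L'H\^opital. By continuity of $x \mapsto m^{-x}$, the success probability tends to $m^{-1/L}$.

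One caveat is that $\epsilon \to 0$ must be taken along values for which $m$ is a power of $1+\epsilon$, for instance $\epsilon_j = m^{1/j}-1$. Along this sequence the formula holds at every term and the limit is unaffected. The only real subtlety is this bookkeeping, together with confirming that a loss is genuinely terminal, which holds since $L \ge 1$. Beyond that the argument is routine.
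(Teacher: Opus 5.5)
Your proposal is correct and follows essentially the same route as the paper: success occurs exactly when Alice wins $\log_{1+\epsilon} m$ consecutive bets, which gives $\left(\frac{L}{L+\epsilon}\right)^{\log_{1+\epsilon} m}$, and this is followed by a change-of-base rewriting and the limit $\log_{1+\epsilon}(1+\epsilon/L) \to 1/L$. Your extra bookkeeping is a welcome precision that the paper leaves implicit, namely checking that a loss is terminal because $L \ge 1$ and letting $\epsilon \to 0$ only along values for which $m$ is a power of $1+\epsilon$.
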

\begin{proof}
Let $T = \log_{1 + \epsilon} m$. Since we chose $\epsilon$ so that $m$ is a power of $(1 + \epsilon)$, the quantity $T$ is an integer. Under the $\epsilon$-doubling strategy, Alice's fortune is multiplied by $(1 + \epsilon)$ each time she wins, and she goes bankrupt immediately if she loses. Thus she reaches $m$ if and only if she wins $T$ consecutive bets.

Each individual bet is won with probability $\frac{L}{L + \epsilon}$, so the success probability is exactly 
\[
    \left(\frac{L}{L + \epsilon}\right)^T
    = (1 + \epsilon/L)^{-T}
    = (1 + \epsilon/L)^{-\log_{1 + \epsilon} m}
    = (1 + \epsilon / L)^{-(\log_{1 + \epsilon/L} m) \log_{1 + \epsilon} (1 + \epsilon/L)}
    = m^{-\log_{1 + \epsilon} (1 + \epsilon/L)}.
\]
Finally, as $\epsilon \to 0$, we have $\log_{1 + \epsilon}(1 + \epsilon/L) \to 1/L$, so the success probability approaches $m^{-1/L}$.
\end{proof}

\paragraph{Optimality of $\epsilon$-doubling. } Finally, we show that the $\epsilon$-doubling strategy is, indeed, optimal as $\epsilon \to 0$:

\begin{proposition}
Any strategy Alice can use in the $L$-Leverage Game wins with probability at most $m^{-1/L}$.
\label{prop:optimality}
\end{proposition}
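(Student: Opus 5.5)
We will find a function $f$ for which $f(X_t)$ is a supermartingale under every legal strategy, and then apply the supermartingale version of the Optional Stopping Theorem. The natural candidate, suggested by the $\epsilon$-doubling analysis, is
\[
f(x) = x^{1/L} \text{ for } x > 0, \qquad f(x) = 0 \text{ for } x \le 0.
\]
If $f(X_t)$ is a supermartingale, then $\E[f(X_\tau)] \le f(X_0) = 1$. At the stopping time, either $X_\tau \le 0$, giving $f(X_\tau) = 0$, or $X_\tau = m$, giving $f(X_\tau) = m^{1/L}$. Hence $\Pr[\text{win}] \cdot m^{1/L} \le 1$, which is the claim. Boundedness is automatic: $X_t \le m$ always, so $f(X_t) \in [0, m^{1/L}]$. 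Termination with probability $1$ is assumed in the game.

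The key step is a one-step inequality. Fix $x = X_{t-1} > 0$ and let $Y = X_t$ be any random variable with $\E[Y] = x$ and $Y \ge -(L-1)x$. We want $\E[f(Y)] \le f(x)$. By scale invariance, $f(cy) = c^{1/L} f(y)$ for $c > 0$, so we may take $x = 1$. It then suffices to find an affine function $\ell(y) = a + b(y-1)$ with
\[
\ell(1) = 1 \qquad \text{and} \qquad \ell(y) \ge f(y) \text{ for all } y \ge -(L-1).
\]
Given such an $\ell$, we get $\E[f(Y)] \le \E[\ell(Y)] = \ell(1) = 1$.

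We take $\ell$ to be the tangent line to $y^{1/L}$ at $y = 1$, namely $\ell(y) = 1 + (y-1)/L$.
\begin{itemize}
\item For $y > 0$, concavity of $y^{1/L}$ (since $L \ge 1$) gives $y^{1/L} \le \ell(y)$.
\item For $-(L-1) \le y \le 0$, we have $\ell(y) = (y + L - 1)/L \ge 0 = f(y)$.
\end{itemize}
The leverage constraint is used exactly here. The tangent line hits zero precisely at $y = -(L-1)$, which is why the exponent $1/L$ is the right one and matches the $\epsilon$-doubling lower bound.

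The only delicate point is that $f$ is not concave on all of $\mathbb{R}$, since it has a kink at $0$. So we cannot invoke Jensen's inequality directly, and must use the tangent-line domination restricted to the allowed range $[-(L-1)x, \infty)$. The tangent-line check above handles this. Once the game has ended, the process is frozen and the supermartingale property holds trivially.
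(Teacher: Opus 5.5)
Your proof is correct. The outer structure matches the paper: the supermartingale $(X_t^+)^{1/L}$, boundedness from $X_t \le m$, and the supermartingale form of the OST. Where you differ is in how you prove the one-step inequality.

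The paper decomposes the conditional law of $X_t$ into two-point bets $R_{a,b}$ with mean $X_{t-1}$. It applies Jensen when $a \ge 0$. When $a<0$, it observes that lowering $a$ to $-(L-1)X_{t-1}$ can only increase $\E[(R_{a,b}^+)^{1/L}]$. That turns the bet into an $\epsilon$-doubling bet and reduces everything to $(1+\epsilon)^{1/L} \le 1+\epsilon/L$.

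You replace the decomposition and the case analysis with a single affine majorant $\ell(y) = 1+(y-1)/L$ of $f$ on $[-(L-1),\infty)$, plus linearity of expectation. The underlying inequality is the same. The paper's Bernoulli step is your tangent-line bound at $y=1+\epsilon$, and $\ell \ge 0$ on $[-(L-1),0]$ does the work of sending $a$ to the leverage floor.

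What your version buys:
\begin{itemize}
\item It does not need the claim that every mean-$X_{t-1}$ law on $[-(L-1)X_{t-1},\infty)$ is a mixture of two-point laws. The paper uses this without justification, and it is not immediate for continuous bets.
\item It does not need the monotonicity in $a$.
\item It shows at a glance why $1/L$ is the largest usable exponent. The tangent to $y^p$ at $1$ vanishes at $1-1/p$, which lies at or below $-(L-1)$ exactly when $p \le 1/L$.
\end{itemize}

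What the paper's route buys is that it names the extremal bets explicitly as $\epsilon$-doubling bets. Your picture recovers this too. For $L>1$, equality in $f \le \ell$ holds only at $y=1$ and $y=-(L-1)$, so nearly tight bets must put almost all their mass just around $1$ and the rest at the floor.
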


We can prove Proposition \ref{prop:optimality} by a simple application of the OST, but where rather than examining the martingale $X_0, X_1, \ldots$, we examine the supermartingale $Y_0, Y_1, \ldots$, where $Y_t = {X_t^+}^{1/L}$, and where $X_t^+ = \max\{X_t, 0\}$.

Specifically, if we can show that $Y_0, Y_1, \ldots$ is a supermartingale, then we can apply the OST to conclude that  
$$\E[Y_\tau] \le Y_0 = 1^{1/L} = 1.$$
But, depending on whether Alice wins, $Y_\tau$ is either $m^{1/L}$ or $0$, so $$\E[Y_\tau] =  \Pr[X_\tau = m] \cdot m^{1/L}.$$
This implies that $\Pr[X_\tau = m] \le 1/m^{1/L}$, as desired. 

To complete the proof of Proposition \ref{prop:optimality}, it suffices to show that $Y_0, Y_1, \ldots$ is, in fact, a supermartingale. The basic idea behind the proof is to show that, if there were any bet Alice could make that would make \begin{equation}\E[Y_t \mid X_0, X_1, \ldots, X_{t - 1}] > Y_{t - 1},\label{eq:breakmartingale}\end{equation}
then it would actually be the type of bet that the $\epsilon$-doubling strategy makes (for some $\epsilon > 0$), and it is straightforward to show that this type of bet does not result in \eqref{eq:breakmartingale}.
\begin{lemma}
The sequence $Y_0, Y_1, \ldots$ is a supermartingale.
\end{lemma}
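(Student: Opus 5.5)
We need to show $\E[Y_t \mid X_0,\dots,X_{t-1}] \le Y_{t-1}$. Condition on the history and write $x = X_{t-1}$. If $x \le 0$ or $x \ge m$ the game has ended and $Y_t = Y_{t-1}$, so assume $0 < x \le m$. By scale invariance (divide everything by $x$, which multiplies $Y$ by $x^{-1/L}$) we may take $x = 1$. The claim then becomes: for every random variable $Z$ with $\E[Z] = 1$ and $Z \ge -(L-1)$ almost surely, we have $\E[(Z^+)^{1/L}] \le 1$.

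The plan is to dominate the function $f(z) = (z^+)^{1/L}$ on $[-(L-1), \infty)$ by an affine function $g(z) = a + bz$ with $g(1) = 1$ and $g \ge f$ on the allowed domain. Then
$$\E[f(Z)] \le \E[g(Z)] = a + b\,\E[Z] = g(1) = 1.$$

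The natural candidate for $g$ is the tangent line to $z^{1/L}$ at $z = 1$, namely
$$g(z) = 1 + \frac{z-1}{L}.$$
This is the line that the $\epsilon$-doubling bets approach as $\epsilon \to 0$: those bets connect $-(L-1)$ and $1+\epsilon$, and the secant through $(-(L-1), 0)$ and $(1, 1)$ has slope $1/L$, which matches the tangent.

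We verify $g \ge f$ in two ranges.

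\emph{Case $z \ge 0$.} Here $f(z) = z^{1/L}$ is concave because $1/L \le 1$. A concave function lies below its tangent line, so $z^{1/L} \le 1 + (z-1)/L$. This is just Bernoulli's inequality, or equivalently weighted AM--GM applied to $z$ and $1$.

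\emph{Case $-(L-1) \le z < 0$.} Here $f(z) = 0$, and $g(z) = (z + L - 1)/L \ge 0$ exactly because $z \ge -(L-1)$. This is precisely where the leverage constraint is used. It is tight at $z = -(L-1)$, which explains why the optimal bets lose exactly $L$ times the current fortune.

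The only step needing care is the choice of the affine majorant. Once the tangent line at $1$ is chosen, both cases are elementary, and I do not expect a real obstacle. For completeness, note that $Y_t$ is integrable, since $0 \le Y_t \le m^{1/L}$ by the assumption $X_t \le m$. Undoing the scaling gives
$$\E[Y_t \mid X_0,\dots,X_{t-1}] \le x^{1/L} = Y_{t-1}.$$
This boundedness also places us within the bounded hypothesis of the OST, as used above, so $Y_0, Y_1, \ldots$ is a supermartingale.
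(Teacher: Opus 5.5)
Your proof is correct, and it takes a different route from the paper.

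The paper's argument runs in three steps:
\begin{enumerate}
\item It writes the conditional law of $X_t$ as a mixture of two-point distributions $R_{a,b}$ with mean $X_{t-1}$.
\item It handles the case $a \ge 0$ by Jensen's inequality.
\item For $a<0$, it argues that pushing $a$ down to $-(L-1)X_{t-1}$ only increases $\E[(R_{a,b}^+)^{1/L}]$. At that point $R_{a,b}$ is an $\epsilon$-doubling bet, and Bernoulli's inequality $(1+\epsilon)^{1/L}\le 1+\epsilon/L$ finishes the proof.
\end{enumerate}

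You instead exhibit a single affine majorant $g(z)=1+(z-1)/L$ of $(z^+)^{1/L}$ on the whole admissible range $[-(L-1),\infty)$, with $g(1)=1$, and integrate it against the law of $Z$. In effect you apply Jensen's inequality to the concave envelope of $f$ on the constrained domain.

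What your route buys:
\begin{itemize}
\item It is shorter and self-contained. You never need the mixture representation, which the paper invokes without proof, nor the monotonicity-in-$a$ step.
\item It isolates the role of the leverage constraint in the single inequality $g\ge 0$ on $[-(L-1),0)$.
\end{itemize}

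What the paper's route buys: it exhibits the extremal bets directly as $\epsilon$-doubling bets, which motivates the strategy. Your version recovers the same insight. For $L>1$, $g$ touches $f$ only at $z=1$ and $z=-(L-1)$, so a bet comes close to equality only by putting its mass near those two points.
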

\begin{proof}
Condition on the outcomes of $X_1, \ldots, X_{t - 1}$ with $X_{t - 1} > 0$. We wish to show that the conditional expectation of $(X_t^+)^{1/L}$ is at most $X_{t - 1}^{1/L}$. For $a \le X_{t - 1} \le b$, let $R_{a, b}$ be the random variable that has support $\{a, b\}$ and expected value $X_{t - 1}$. Since $\E[X_t] = X_{t - 1}$ and since the support of $X_t$ is contained in $[-(L - 1) X_{t - 1}, \infty)$, we can think of $X_t$ as a weighted combination of random variables from $\{R_{a, b} \mid -(L - 1) X_{t - 1} \le a < X_{t - 1} \le b\}$. It therefore suffices to focus on one such random variable $R_{a, b}$ and show that
 $$\E[(R_{a, b}^+)^{1/L}] \le X_{t - 1}^{1/L}.$$

Notice that, if $a, b \ge 0$, then 
$\E[(R_{a, b}^+)^{1/L}] = \E[R_{a, b}^{1/L}]$, which by Jensen's inequality (since $x \mapsto x^{1/L}$ is concave for $L \ge 1$) is at most $\E[R_{a, b}]^{1/L} =  X_{t - 1}^{1/L}$. 

If $a < 0$, then $\E[(R_{a, b}^+)^{1/L}]$ is maximized by setting $a$ to be as negative as possible, i.e., $a = -(L - 1)X_{t - 1}$. Moreover, 
$$\E[(R_{a, b}^+)^{1/L}] = \Pr[R_{a, b} = b] \cdot b^{1/L}.$$
Notice that, if we define $\epsilon > 0$ such that $b = (1 + \epsilon)X_{t - 1}$, then $R_{a, b}$ is exactly the type of bet the $\epsilon$-doubling strategy makes! Specifically, we have $\Pr[R_{a, b} = b] = \frac{L}{L + \epsilon}$, so 
$$\E[(R_{a, b}^+)^{1/L}] = (X_{t - 1} (1 + \epsilon))^{1/L}  \cdot \frac{L}{L + \epsilon} = X_{t - 1}^{1/L} (1 + \epsilon)^{1/L} / (1 + \epsilon/L).$$
Using the identity $(1 + x)^{1/L} \le 1 + x/L$ for $x \ge 0$, we obtain
$$\E[(R_{a, b}^+)^{1/L}] \le X_{t - 1}^{1/L},$$
as desired.
\end{proof}

This completes the proof that $\epsilon$-doubling, as $\epsilon$ goes to $0$, is an optimal strategy for the $L$-Leverage Game.

\paragraph{A remark on interest. }We remark that there are many ways in which the model in this paper is overly simplistic. One that is worth at least a brief discussion is the role of interest. A highly leveraged company such as Lehman Brothers does not get to borrow money \emph{for free}. Lehman estimated that it paid an annualized interest rate of approximately 5\% on its borrowed assets \cite[Pages 105--107]{Lehman2007AnnualReport}. 

One can naturally extend the L-Leverage Game to include interest, by requiring the company to, as part of each bet, pay $\delta \cdot k$ dollars interest, where $k$ is the amount of money that the company must borrow to place the bet. It should be noted, however, that so long as $\delta < 1/L$, this does not change the high-level conclusion of the analysis. The company can, for example, make repeated bets in which (prior to the interest payment) it either triples its money, or replaces its current amount $n$ with $-Ln$ dollars (going bankrupt). If the bet wins, then the interest payment is less than $n$ dollars, meaning the company still doubled its money. Each bet is won with probability $1 - O(1/L)$, so the probability of getting from $1$ dollar to $m$ dollars (winning $\log m$ bets in a row) remains $(1 - O(1/L))^{\log m} = m^{-O(1/L)}$.

\section*{Acknowledgments}
This work was supported in part by NSF grants CCF-2542165 and CCF-2504471, and by a Jane Street grant.

\bibliographystyle{plain}
\bibliography{refs}
\end{document}